\documentclass[11pt,a4paper]{article}

\usepackage{titlesec}
\usepackage{amsmath,amssymb,amsthm,mathtools}
\usepackage{array}
\usepackage{booktabs}
\usepackage{enumitem}
\usepackage{flafter}
\usepackage{placeins}
\usepackage{xcolor}
\usepackage{microtype}
\usepackage{bm}
\usepackage{graphicx}
\usepackage{url}
\usepackage[pdftex,pagebackref,colorlinks]{hyperref}
\hypersetup{
  pdftitle={Linear programming bounds for binary and ternary LCD Codes},
  pdfauthor={Ming-Hsuan Kang; Maosheng Xiong}
}
\date{}

\newcommand{\mb}[1]{\mathbf{#1}}
\newcommand{\F}{\mathbb F}
\newcommand{\Z}{\mathbb Z}

\newcommand{\wt}{\operatorname{wt}}
\newcommand{\abs}[1]{\left|#1\right|}
\newcommand{\set}[1]{\left\{#1\right\}}

\newcommand{\ii}{\mathrm{i}}
\newcommand{\Real}{\operatorname{Re}}

\theoremstyle{plain}
\newtheorem{theorem}{Theorem}[section]
\newtheorem{proposition}[theorem]{Proposition}
\newtheorem{lemma}[theorem]{Lemma}

\theoremstyle{definition}
\newtheorem{definition}[theorem]{Definition}

\theoremstyle{remark}
\newtheorem{remark}[theorem]{Remark}

\begin{document}

\title{Linear programming bounds for binary and ternary LCD Codes}

% Author affiliations are placed in a title footnote, as in LX-CPP.tex.
\author{Ming-Hsuan Kang\thanks{Corresponding author.
\newline\indent Ming-Hsuan Kang, Department of Applied Mathematics,
National Yang Ming Chiao Tung University, Hsinchu, Taiwan
(e-mail: \texttt{kmsming@gmail.com}).
\newline\indent Maosheng Xiong, Department of Mathematics,
The Hong Kong University of Science and Technology, Clear Water Bay,
Hong Kong, P. R. China (e-mail: \texttt{mamsxiong@ust.hk}).},\, Maosheng Xiong}

\maketitle

\begin{abstract}
We derive linear programming (LP) bounds on the minimum distance of binary and ternary linear complementary dual (LCD) codes by imposing arithmetic constraints on their weight enumerators. Special values of the weight enumerator give finitely many Gauss phases, each of which yields linear equations in the ordinary weight-distribution variables. The resulting bounds strengthen the real-valued LCD counting LP without introducing additional variables; both the number of branches and the number of added equations per branch are bounded independently of the code length. Exact certificates establish strict improvements for $62$ binary parameter pairs of length at most $20$ and $39$ ternary pairs of length at most $14$. For four binary pairs, the bounds also improve the joint-weight-enumerator LP while using fewer variables per branch. These comparisons show that Gauss-phase information provides a compact strengthening of existing LP relaxations for LCD codes.
\end{abstract}

\noindent\textbf{Keywords:} linear complementary dual code, linear programming, weight
enumerator, quadratic Gauss sum, joint weight enumerator

\noindent\textbf{MSC:} 94B05, 11E04, 90C05

\section{Introduction}
\label{sec:introduction}
%\input intro.tex
% LP-first Introduction for review; uses the preamble in LCD-KX-1.tex.
% The unnumbered display recalls the LCD criteria from thm:ternary-gauss and thm:binary-gauss.
% The closing roadmap follows the agreed revised organization.

Let $\F_q$ be the finite field of order $q$.  A linear code
$C\leq\F_q^n$ is called \emph{linear complementary dual} (LCD) if
$C\cap C^\perp=\set{\bm 0}$, where $C^\perp$ denotes its Euclidean dual.
Introduced by Massey \cite{Massey1992}, LCD codes form an important
family in coding theory, with applications in multisecret sharing
\cite{AlahmadiEtAl2020}, lattice constructions \cite{HouOggier2016},
and protection against side-channel and fault-injection attacks
\cite{BringerEtAl2014,CarletGuilley2016}, as well as connections with
complements in projective space motivated by network coding
\cite{BraunEtzionVardy2013}.
These applications have stimulated extensive research on LCD codes,
including their construction \cite{Jin2017,LiDingLi2017}, algebraic
structure \cite{CarletEtAl2019}, and classification and optimal
parameters \cite{Bouyuklieva2021,HaradaSaito2019}.
For recent developments, interested readers may refer to
\cite{MondalLee2026,XieZhu2025,ZhangZhengWang2026} and the references
therein.

We seek upper bounds on the minimum distance of LCD codes with fixed
length $n$ and dimension $k$.  For $q>3$, every linear code over $\F_q$
is monomially equivalent to an LCD code \cite{CarletEtAl2018}, so the
LCD condition imposes no further restriction on the largest attainable
minimum distance.  Therefore we focus only on binary and ternary cases.

Delsarte's linear programming (LP) method~\cite{Delsarte1973} is
one of the most effective tools for bounding codes over fixed
alphabets; see, for example, the McEliece--Rodemich--Rumsey--Welch
(MRRW) rate bounds~\cite{McElieceEtAl1977}.  For linear codes, the LP
incorporates the MacWilliams identities~\cite{MacWilliams1963} as
linear constraints relating the weight distributions of a code and
its dual.  The framework also accommodates additional constraints
reflecting the structure of particular code
families, as illustrated by recent LP bounds for locally recoverable
codes~\cite{gruicaLRCSDuality2026,LWX-LRC}.  

For an LCD code $C$ and any $t>0$, the weight-$t$ codewords in
$C$ and $C^\perp$ are distinct, so their combined number cannot exceed
the number of weight-$t$ vectors in the ambient space.  Dougherty et al.\
\cite{DoughertyEtAl2017} combined these counting inequalities with the
MacWilliams identities to develop an LP bound for binary LCD codes
using only the ordinary weight distributions.  Karabakla and
\"Ozkaya \cite{KarabaklaOzkaya2026} extended this method
to ternary codes.  The reported numerical results show that this
counting approach yields useful bounds in certain parameter ranges.
We call the real-valued formulation at fixed dimension the
\emph{LCD counting LP}; its constraints are given in
Definition~\ref{def:lcd-counting-lp}.

Alahmadi et al.\ \cite{AlahmadiEtAl2019} developed the
\emph{joint-weight-enumerator LP} for binary LCD codes using the
joint weight enumerator of a code and its dual.  This enumerator
records coordinate patterns of codeword pairs,
allowing the orthogonality between the codes and the LCD
condition to be expressed as linear constraints on its coefficients.
Their computations for lengths $n\leq16$ improve earlier LP
bounds for many parameter pairs~\cite{DoughertyEtAl2017}.  However, the $O(n^3)$ joint
coefficients make the LP computationally demanding.
A direct ternary extension is conceivable, but would involve more
coordinate patterns and likely even greater computational cost.
Definition~\ref{def:mixed-benchmark} in
Appendix~\ref{sec:mixed-comparison} specifies this formulation used
for our comparisons.

%\noindent\textbf{Our contribution.}

In this paper, we develop new LP bounds for binary and ternary LCD
codes that strengthen the LCD counting LP while retaining its
$O(n)$ ordinary weight variables.  We use a finite collection of LPs,
each imposing only a bounded number of additional linear constraints
on these variables.  The number of LPs and the number of additional constraints in each are bounded independently of the code length. Computations show strict improvements over the
real-valued LCD counting LP for both
alphabets, as well as four strict improvements over the
joint-weight-enumerator LP~\cite{AlahmadiEtAl2019} on selected binary
parameter pairs.  These four comparisons are verified exactly using
rational feasibility witnesses and integer Farkas certificates.

Our LP bounds use the following characterizations of binary and
ternary LCD codes through special values of their weight
enumerators.  For a $k$-dimensional linear code $C\leq\F_q^n$, write
\[
 W_C(z)=\sum_{\bm u\in C}z^{\wt(\bm u)},
\]
where $\wt(\bm u)$ is the Hamming weight of the vector $\bm u$.  Set $\ii=\sqrt{-1}$ and
$\omega=e^{2\pi\ii/3}$.  For $q\in\set{2,3}$, one has
\[
 C\text{ is LCD}
 \quad\Longleftrightarrow\quad
 \begin{cases}
 \abs{W_C(\ii)}=2^{k/2},&\text{if }q=2,\\
 \abs{W_C(\omega)}=3^{k/2},&\text{if }q=3.
 \end{cases}
\]

While preparing this paper, we became aware that these
characterizations follow from earlier work on matroids and
Tutte polynomials.  Jaeger's formula~\cite{Jaeger1989}
gives the ternary characterization directly.  The binary
characterization is obtained by combining Vertigan's formula
for Tutte polynomials~\cite{Vertigan1998} with
Greene's relation between Tutte polynomials and weight
enumerators~\cite{Greene1976}.  More explicit expressions for
these special values in the ternary and binary cases were
given by Gioan and Las Vergnas~\cite{GioanLasVergnas2007}
and Pendavingh~\cite{Pendavingh2014}, respectively. 
For completeness, Section~\ref{sec:characterization} presents these
known results using quadratic forms modulo three in the ternary
case and modulo four in the binary case, interpreting the special
values of the weight enumerators as the associated Gauss sums.

Our main contribution in this paper is to derive additional linear constraints on
ordinary weight distributions from these characterizations and to 
use them to strengthen LP bounds for binary and ternary LCD codes. 
For LCD codes, these Gauss sums have only finitely many possible
phases.  Fixing a phase, together with the parity
information, gives linear constraints on the ordinary weight
distributions.  Every LCD code satisfies one of the resulting
LPs, so infeasibility of all of them excludes the prescribed
parameters. To our knowledge, this is the first use of these weight enumerator
characterizations to derive LP bounds for binary and ternary LCD
codes. 

Section~\ref{sec:common} collects the coding-theoretic preliminaries
and MacWilliams identities.  Section~\ref{sec:characterization}
presents the ternary and binary characterizations, phases, and duality
relations, and Section~\ref{sec:gauss-lp} develops the corresponding
LP bounds in the same order.  Section~\ref{sec:computations}
presents the numerical comparisons and exact verification.
Section~\ref{sec:conclusion} summarizes the results.
Appendix~\ref{sec:mixed-comparison} records the
joint-weight-enumerator benchmark.

\section{Preliminaries}
\label{sec:common}

Let $\F_q$ be the finite field of order $q$. An $[n,k,d]_q$ code is a $k$-dimensional linear subspace of $\F_q^n$ with minimum distance $d$.

Throughout, let $C$ be an $[n,k,d]_q$ code with $1\leq k<n$.
For any $\bm x, \bm y \in \F_q^n$, the Euclidean inner product is defined as 
\[ \bm x\cdot \bm y:=\sum_{i=1}^n x_i y_i,\] 
and 
\[C^\perp :=\left\{\bm y \in \F_q^n: \bm x \cdot \bm y=0 \quad \forall \bm x \in C\right\}.\]
The \emph{hull} of $C$ is 
$C\cap C^\perp$; thus $C$ is LCD precisely when its hull is zero.
If $G$ is a generator matrix of $C$ and
$G^{\mathsf T}$ denotes its transpose, the Gram criterion
\cite{Massey1992} states that
\begin{equation}
 C\text{ is LCD}
 \quad\Longleftrightarrow\quad
 GG^{\mathsf T}\text{ is nonsingular}.
\label{eq:gram-criterion}
\end{equation}
In this case, $\F_q^n=C\oplus C^\perp$ with the two summands orthogonal.

For $0\leq j\leq n$, write
\[
 A_j(C)=\abs{\set{\mb c \in C:\wt(\mb c)=j}},
 \qquad
 W_C(z)=\sum_{j=0}^n A_j(C)z^j.
\]
The polynomial $W_C(z)$ is called the weight enumerator of $C$. 
We abbreviate $A_j=A_j(C)$ and $B_j=A_j(C^\perp)$ when the underlying code $C$ is clear from the context. 
To express the relation between $A_j$ and $B_j$, let
\[
 K_j^{(q)}(w)=
 \sum_{\ell=0}^j(-1)^\ell(q-1)^{j-\ell}
 \binom{w}{\ell}\binom{n-w}{j-\ell}, \quad 0 \le j,w \le n
\]
be the $q$-ary Krawtchouk polynomial of length $n$, with binomial coefficients
taken to be zero outside their usual range.  The MacWilliams
identities \cite{MacWilliams1963,MacWilliamsSloane1977} give
\begin{equation}
 q^kB_j=\sum_{w=0}^n A_wK_j^{(q)}(w),
 \qquad 0\leq j\leq n.
\label{eq:macwilliams}
\end{equation}

These equations also make sense for real arrays $(A_j)_{j=0}^n, (B_j)_{j=0}^n$ that need not be
weight distributions.  Their polynomial form will supply the duality
relations used in the LP bounds.

\begin{lemma} \label{lem:formal-macwilliams}
Let real arrays $(A_j)$ and $(B_j)$ satisfy \eqref{eq:macwilliams}, and
put
\[
 A(z)=\sum_{j=0}^nA_jz^j,
 \qquad
 B(z)=\sum_{j=0}^nB_jz^j.
\]
Then
\begin{equation}
 B(z)=q^{-k}\sum_{w=0}^n A_w \left(1+(q-1)z\right)^{n-w} \left(1-z\right)^w.
\label{eq:formal-macwilliams}
\end{equation}
%When the displayed denominator vanishes, the right-hand side is interpreted by polynomial expansion.
\end{lemma}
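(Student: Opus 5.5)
The plan is to prove \eqref{eq:formal-macwilliams} directly from the definition of the Krawtchouk polynomials, by identifying $K_j^{(q)}(w)$ as a coefficient in a generating function. The key identity is
\[
 \left(1+(q-1)z\right)^{n-w}\left(1-z\right)^w=\sum_{j=0}^n K_j^{(q)}(w)\,z^j,
 \qquad 0\leq w\leq n,
\]
valid as an identity of polynomials in $z$.

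To verify it, expand both factors by the binomial theorem:
\[
 (1-z)^w=\sum_{\ell}(-1)^\ell\binom{w}{\ell}z^\ell,
 \qquad
 \left(1+(q-1)z\right)^{n-w}=\sum_{m}(q-1)^m\binom{n-w}{m}z^m .
\]
Multiplying and collecting the coefficient of $z^j$ (setting $m=j-\ell$) gives exactly
\[
 \sum_{\ell=0}^j(-1)^\ell(q-1)^{j-\ell}\binom{w}{\ell}\binom{n-w}{j-\ell},
\]
which is $K_j^{(q)}(w)$. Two points need checking:
\begin{itemize}
 \item The convention that binomial coefficients vanish outside their range is exactly what makes the sum over $\ell$ run over all $0\leq\ell\leq j$.
 \item The product has degree $n$, so no coefficients beyond $z^n$ appear, matching the range $0\leq j\leq n$.
\end{itemize}

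With the identity in hand, multiply \eqref{eq:macwilliams} by $z^j$ and sum over $j=0,\dots,n$:
\[
 q^kB(z)=\sum_{j}\sum_{w}A_wK_j^{(q)}(w)z^j.
\]
Both sums are finite, so they can be interchanged to give $\sum_w A_w\sum_j K_j^{(q)}(w)z^j$. Substituting the generating-function identity and dividing by $q^k$ yields \eqref{eq:formal-macwilliams}.

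The argument uses only linearity, so nothing about $(A_j)$, $(B_j)$ being actual weight distributions is needed. There is no real obstacle; the only care point is the index bookkeeping in the coefficient extraction.
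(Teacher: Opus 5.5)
Your proof is correct and follows the paper's argument: multiply \eqref{eq:macwilliams} by $z^j$, sum over $j$, interchange the finite sums, and apply the Krawtchouk generating identity. The only difference is that the paper cites that identity from MacWilliams--Sloane, while you verify it directly by binomial expansion and coefficient extraction.
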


\begin{proof}
Multiplying the $j$-th equation in \eqref{eq:macwilliams} by $z^j$
and summing over $j$, we obtain \eqref{eq:formal-macwilliams} from
the generating identity for the Krawtchouk polynomials
\cite[Chapter~5]{MacWilliamsSloane1977}:
\[
 \sum_{j=0}^nK_j^{(q)}(w)z^j
 =(1+(q-1)z)^{n-w}(1-z)^w.
\]
\end{proof}

\section{Characterization of ternary and binary LCD codes}
\label{sec:characterization}

We collect the known weight-enumerator characterizations of ternary
and binary LCD codes and the associated phase information.
These follow from earlier work on matroids and Tutte polynomials
\cite{Greene1976,Jaeger1989,Vertigan1998,GioanLasVergnas2007,Pendavingh2014},
while the duality relations follow from the MacWilliams identities.
For completeness, we give detailed proofs using quadratic forms
modulo three and modulo four, in a form suited to the LP bounds
in Section~\ref{sec:gauss-lp}.

\subsection{Ternary LCD codes}
\label{sec:ternary-gauss}

Let $C\leq\F_3^n$ be an $[n,k,d]_3$ code, and put
$Q(\bm x)=\bm x\cdot\bm x$ for $\bm x\in \F_3^n$.  Its associated
symmetric bilinear form is
\[
 \mathfrak{b}(\bm x,\bm y)
 =\frac12\bigl(Q(\bm x+\bm y)-Q(\bm x)-Q(\bm y)\bigr)
 =\bm x\cdot\bm y.
\]
The \emph{radical} of this form is
\[
 R=\set{\bm x\in C:\mathfrak{b}(\bm x,\bm y)=0
       \text{ for all }\bm y\in C}
   =C\cap C^\perp.
\]
We call $Q$ \emph{nondegenerate} if $R=0$; thus $C$ is LCD precisely
when $Q$ is nondegenerate.  Since every nonzero element of $\F_3$
has square $1$,
\[
 \bm x\cdot\bm x\equiv\wt(\bm x)\pmod3,
 \qquad \bm x\in\F_3^n.
\]
Consequently, with $\omega=e^{2\pi\ii/3}$,
\[
 W_C(\omega)
 =\sum_{\bm x\in C}\omega^{\wt(\bm x)}
 =\sum_{\bm x\in C}\omega^{\bm x\cdot\bm x}.
\]
This special value of the weight enumerator is the Gauss sum of the quadratic form $Q$.

\begin{theorem}
\label{thm:ternary-gauss}
With the notation above, put $\rho=\dim R$.  Then
\begin{equation}
 \abs{W_C(\omega)}=3^{(k+\rho)/2}.
\label{eq:ternary-radical}
\end{equation}
Consequently
\[
 C\text{ is LCD}
 \quad\Longleftrightarrow\quad
 \abs{W_C(\omega)}=3^{k/2}.
\]
\end{theorem}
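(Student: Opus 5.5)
The plan is to compute $\abs{W_C(\omega)}^2$ directly as a double sum and collapse it with character orthogonality over the subspace $C$. Write $S=W_C(\omega)=\sum_{\bm x\in C}\omega^{Q(\bm x)}$. Since $\overline{\omega}=\omega^{-1}$, we have
\[
 \abs{S}^2=\sum_{\bm x,\bm y\in C}\omega^{Q(\bm x)-Q(\bm y)}.
\]
I will substitute $\bm x=\bm y+\bm z$ with $\bm z\in C$. This is a bijection of $C\times C$ because $C$ is a linear subspace. In $\F_3$ we have $2=-1$, so
\[
 Q(\bm y+\bm z)-Q(\bm y)=Q(\bm z)+2\,\bm y\cdot\bm z=Q(\bm z)-\bm y\cdot\bm z .
\]
Hence
\[
 \abs{S}^2=\sum_{\bm z\in C}\omega^{Q(\bm z)}\sum_{\bm y\in C}\omega^{-\bm y\cdot\bm z}.
\]

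Next I evaluate the inner sum. For fixed $\bm z$, the map $\bm y\mapsto\bm y\cdot\bm z$ is an $\F_3$-linear functional on $C$. It is identically zero exactly when $\bm z\in R=C\cap C^\perp$. Otherwise it is surjective onto $\F_3$, and each value is taken $3^{k-1}$ times. Since $1+\omega+\omega^2=0$, the inner sum equals $3^k$ if $\bm z\in R$ and $0$ otherwise. Therefore
\[
 \abs{S}^2=3^k\sum_{\bm z\in R}\omega^{Q(\bm z)}.
\]

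The step I expect to need the most care is showing that $Q$ vanishes on $R$. For $\bm z\in R$ we have $\bm z\in C$ and $\bm z\in C^\perp$, so $\bm z\cdot\bm z=0$, that is, $Q(\bm z)=0$. Thus every term of the last sum is $1$, and the sum equals $\abs{R}=3^\rho$. This gives $\abs{S}^2=3^{k+\rho}$, which is \eqref{eq:ternary-radical}.

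The stated equivalence follows at once. $C$ is LCD exactly when $\rho=0$, and $3^{(k+\rho)/2}=3^{k/2}$ holds exactly when $\rho=0$.
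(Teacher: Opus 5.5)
Your proposal is correct and follows the paper's proof essentially step for step: the substitution $\bm x=\bm y+\bm z$, character orthogonality over $C$ to isolate $R=C\cap C^\perp$, and the observation that $Q(\bm z)=\bm z\cdot\bm z=0$ for $\bm z\in R$. The only differences are cosmetic: you write $\omega^{-\bm y\cdot\bm z}$ where the paper writes the equivalent $\omega^{2\bm y\cdot\bm z}$, and you read off the LCD equivalence directly from $\rho=0$ rather than citing the Gram criterion.
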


\begin{proof}
Writing $\bm x=\bm y+\bm z$ in the squared magnitude gives
\[
 \abs{W_C(\omega)}^2
 =\sum_{\bm z\in C}\omega^{\bm z\cdot\bm z}
   \sum_{\bm y\in C}\omega^{2\bm y\cdot\bm z}.
\]
For fixed $\bm z$, the map
$\bm y\mapsto\omega^{2\bm y\cdot\bm z}$ is an additive character
of $C$, trivial exactly when $\bm z\in R$.  Character orthogonality
\cite[Chapter~5]{MacWilliamsSloane1977} gives
\[
 \sum_{\bm y\in C}\omega^{2\bm y\cdot\bm z}=
 \begin{cases}
 3^k,&\bm z\in R,\\
 0,&\bm z\notin R.
 \end{cases}
\]
For $\bm z\in R$, we also have $\bm z\cdot\bm z=0$.  Hence
$\abs{W_C(\omega)}^2=3^k3^\rho$, proving
\eqref{eq:ternary-radical}.  The LCD equivalence follows from
\eqref{eq:gram-criterion}.
\end{proof}

To determine the exact value of $W_C(\omega)$ for an LCD code $C$, we need some notation. Let $\chi$ denote the quadratic character of $\F_3^\times$. 
For a generator matrix $G$ of $C$, put
$M_G=GG^{\mathsf T}$ and define
\[
 \varepsilon(C)=\chi(\det M_G)\in\set{1,-1}.
\]
A change of basis of $C$ multiplies $\det M_G$ by a nonzero square, so
$\varepsilon(C)$ is independent of the choice of $G$.

\begin{proposition}
\label{prop:ternary-phase}
If $C\leq\F_3^n$ is a $k$-dimensional LCD code, then
\begin{equation}
 W_C(\omega)=\varepsilon(C)\left(\ii\sqrt3\right)^k.
\label{eq:ternary-phase}
\end{equation}
\end{proposition}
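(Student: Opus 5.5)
Our approach is to diagonalize the quadratic form $Q$ restricted to $C$ and reduce $W_C(\omega)$ to a product of one-dimensional Gauss sums. Identify $C$ with $\F_3^k$ through the generator matrix $G$, so that $\bm x=\bm a G$ and
\[
 W_C(\omega)=\sum_{\bm a\in\F_3^k}\omega^{\bm a M_G\bm a^{\mathsf T}}.
\]
Since $C$ is LCD, $M_G$ is a nonsingular symmetric matrix over $\F_3$ by \eqref{eq:gram-criterion}. Because the characteristic is odd, $M_G$ is congruent to a diagonal matrix. Hence there is an invertible $P$ with $PM_GP^{\mathsf T}=\mathrm{diag}(d_1,\dots,d_k)$ and every $d_i\in\set{1,-1}$. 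Replacing $G$ by $PG$ leaves the code unchanged. It also changes $\det M_G$ only by the square $(\det P)^2$, so $\varepsilon(C)=\chi(d_1\cdots d_k)=\prod_i\chi(d_i)$. The sum then factors as
\[
 W_C(\omega)=\prod_{i=1}^k g(d_i),\qquad g(d)=\sum_{a\in\F_3}\omega^{da^2}.
\]

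Next we compute the two one-dimensional sums directly. For $d=1$ we get $g(1)=1+2\omega=\ii\sqrt3$. For $d=-1$ we get $g(-1)=1+2\omega^2=-\ii\sqrt3$. In both cases $g(d)=\chi(d)\,\ii\sqrt3$. Multiplying over $i$ gives
\[
 W_C(\omega)=\prod_i\chi(d_i)\,(\ii\sqrt3)^k=\varepsilon(C)(\ii\sqrt3)^k,
\]
which is \eqref{eq:ternary-phase}. As a consistency check, this agrees with Theorem~\ref{thm:ternary-gauss} with $\rho=0$.

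The only step needing care is the diagonalization of a nondegenerate symmetric bilinear form over $\F_3$, which we use without proof. The standard argument picks $\bm v$ with $Q(\bm v)\ne0$ and splits off $\bm v^\perp$. Such a $\bm v$ exists: if $Q$ vanished identically on $C$, then polarization ($2$ is invertible) would force $\mathfrak b=0$, contradicting nondegeneracy. We then induct on dimension, and each diagonal entry lies in $\F_3^\times=\set{\pm1}$ automatically.
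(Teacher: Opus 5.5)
Your proof is correct and is essentially the paper's argument: choose a basis of $C$ that makes $M_G=GG^{\mathsf T}$ diagonal, note that $\varepsilon(C)$ is then the product of the $\chi$-values of the diagonal entries, and factor $W_C(\omega)$ into one-dimensional Gauss sums $\sum_{t\in\F_3}\omega^{at^2}=\chi(a)\ii\sqrt3$. The only difference is that you check the diagonalization and the two one-dimensional sums ($1+2\omega=\ii\sqrt3$, $1+2\omega^2=-\ii\sqrt3$) directly, whereas the paper invokes them.
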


\begin{proof}
Since the restricted Euclidean form is nondegenerate, we may choose
$G$ so that
\[
 M_G=\operatorname{diag}(a_1,\ldots,a_k),
 \qquad a_j\in\F_3^\times.
\]
Then $\varepsilon(C)=\prod_{j=1}^k\chi(a_j)$, and the Gauss sum
factors over the corresponding orthogonal one-dimensional summands.
Using the identity \cite{GioanLasVergnas2007}
\[
 \sum_{t\in\F_3}\omega^{at^2}=\chi(a)\ii\sqrt3,
 \qquad a\in\F_3^\times,
\]
we obtain
\[
 \begin{aligned}
 W_C(\omega)
 &=\sum_{\bm x \in C}\omega^{\bm x \cdot \bm x}=\prod_{j=1}^k\sum_{t\in\F_3}\omega^{a_jt^2}\\
 &=\prod_{j=1}^k\left(\chi(a_j)\ii\sqrt3\right)
  =\varepsilon(C)\left(\ii\sqrt3\right)^k.
 \end{aligned}
\]
\end{proof}

We next evaluate the MacWilliams identity \eqref{eq:formal-macwilliams} at $\omega$.  The resulting
relation holds for arbitrary real arrays $(A_j),(B_j)$ satisfying \eqref{eq:macwilliams}.

\begin{lemma}
\label{prop:formal-ternary}
Let real arrays $(A_j)$ and $(B_j)$ satisfy
\eqref{eq:macwilliams} with $q=3$, and put
\[
 A(z)=\sum_{j=0}^nA_jz^j,
 \qquad
 B(z)=\sum_{j=0}^nB_jz^j.
\]
Then
\begin{equation}
 B(\omega)=3^{-k}\left(\ii\sqrt3\right)^n\overline{A(\omega)}.
\label{eq:formal-ternary-transport}
\end{equation}
Here the overline denotes complex conjugation.
\end{lemma}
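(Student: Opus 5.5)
We will substitute $z=\omega$ into the formal MacWilliams identity \eqref{eq:formal-macwilliams} of Lemma~\ref{lem:formal-macwilliams} with $q=3$, which gives
\[
 B(\omega)=3^{-k}\sum_{w=0}^n A_w\,(1+2\omega)^{n-w}(1-\omega)^w .
\]
The proof then reduces to evaluating the two constants $1+2\omega$ and $1-\omega$, and rewriting each summand as a fixed factor $(\ii\sqrt3)^n$ times $\overline{\omega}^{\,w}$.

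First, since $\omega=-\tfrac12+\tfrac{\sqrt3}{2}\ii$, we have $1+2\omega=\ii\sqrt3$ directly. Second, we write $1-\omega=\tfrac32-\tfrac{\sqrt3}{2}\ii$ and compare it with $\ii\sqrt3\,\overline{\omega}=\ii\sqrt3\bigl(-\tfrac12-\tfrac{\sqrt3}{2}\ii\bigr)=\tfrac32-\tfrac{\sqrt3}{2}\ii$. This gives $1-\omega=\ii\sqrt3\,\omega^2=\ii\sqrt3\,\overline{\omega}$. Substituting both,
\[
 (1+2\omega)^{n-w}(1-\omega)^w=(\ii\sqrt3)^{n-w}(\ii\sqrt3)^w\overline{\omega}^{\,w}=(\ii\sqrt3)^n\,\overline{\omega^{w}} .
\]

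Finally, because the $A_w$ are real, $\sum_w A_w\overline{\omega^w}=\overline{A(\omega)}$. Pulling out the constant factor then yields \eqref{eq:formal-ternary-transport}. Reality of the arrays is the only hypothesis needed beyond \eqref{eq:macwilliams}, and it enters exactly at this conjugation step. There is no real obstacle here; the only point requiring care is the identity $1-\omega=\ii\sqrt3\,\overline{\omega}$, which we verify by the explicit arithmetic above.
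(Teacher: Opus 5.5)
Your proof is correct and follows the paper's argument: substitute $z=\omega$ into Lemma~\ref{lem:formal-macwilliams}, use $1+2\omega=\ii\sqrt3$ and $1-\omega=\ii\sqrt3\,\omega^2$, and conclude via $A(\omega^2)=\overline{A(\omega)}$ for real coefficients. The paper states the second identity as the ratio $(1-\omega)/(1+2\omega)=\omega^2$ rather than as a product, which is only a cosmetic difference.
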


\begin{proof}
Apply Lemma~\ref{lem:formal-macwilliams} at $z=\omega$.  Since
\[
 1+2\omega=\ii\sqrt3,
 \qquad
 \frac{1-\omega}{1+2\omega}=\omega^2,
\]
and $A(\omega^2)=\overline{A(\omega)}$ for real coefficients,
\eqref{eq:formal-macwilliams} gives
\eqref{eq:formal-ternary-transport}.
\end{proof}

If $C$ is LCD, then so is $C^\perp$, which has dimension $n-k$.
Proposition~\ref{prop:ternary-phase} and
Lemma~\ref{prop:formal-ternary} therefore give
\[
 \begin{aligned}
 W_{C^\perp}(\omega)
 &=3^{-k}\left(\ii\sqrt3\right)^n\overline{W_C(\omega)}\\
 &=3^{-k}\left(\ii\sqrt3\right)^n
   \varepsilon(C)\left(-\ii\sqrt3\right)^k\\
 &=\varepsilon(C)\left(\ii\sqrt3\right)^{n-k}.
 \end{aligned}
\]
Applying Proposition~\ref{prop:ternary-phase} to $C^\perp$ and
comparing the two expressions yields
\begin{equation}
 \varepsilon(C^\perp)=\varepsilon(C).
\label{eq:ternary-dual-sign}
\end{equation}

\subsection{Binary LCD codes}
\label{sec:binary-gauss}

Let $C\leq\F_2^n$ be an $[n,k,d]_2$ code.  Define
\[
 Q:\F_2^n \longrightarrow \Z/4\Z,
 \qquad Q(\bm x)=\wt(\bm x)\pmod4.
\]
Equivalently, $Q(\bm x)$ is the sum of the squares of the
integer representatives $0,1$ of its coordinates, reduced modulo
four.  For $\bm x,\bm y\in\F_2^n$, we have
\begin{equation}
 Q(\bm x+\bm y)-Q(\bm x)-Q(\bm y)
 \equiv 2\left( \bm x\cdot\bm y\right)\pmod4.
\label{eq:binary-polarization}
\end{equation}
Here vector addition and the inner product are taken over $\F_2$,
and $2(\bm x\cdot\bm y)$ is interpreted as $0$ or $2$ modulo four.
Thus $Q$ is a $\mathbb Z/4\mathbb Z$-valued quadratic form,
or quadratic refinement, associated with the Euclidean bilinear
form $\mathfrak{b}(\bm x,\bm y)=\bm x\cdot\bm y$
\cite{Brown1972,Wood1993}.

The radical of the restricted bilinear form is
\[
 R=C\cap C^\perp.
\]
We call the restriction of $Q$ to $C$ \emph{nondegenerate} if
$R=0$; this is equivalent to $C$ being LCD.  Since
$\ii^{Q(\bm x)}=\ii^{\wt(\bm x)}$ for any $\bm x \in \F_2^n$,
\[
 W_C(\ii)
 =\sum_{\bm x\in C}\ii^{\wt(\bm x)}
 =\sum_{\bm x\in C}\ii^{Q(\bm x)}.
\]
This special value of the weight enumerator is the Gauss sum of the
restricted refinement.

\begin{theorem}
\label{thm:binary-gauss}
With the notation above, put $\rho=\dim R$.  Then
\begin{equation}
 \abs{W_C(\ii)}=
 \begin{cases}
 0,&Q|_R\not\equiv0,\\
 2^{(k+\rho)/2},&Q|_R\equiv0.
 \end{cases}
\label{eq:binary-radical}
\end{equation}
Consequently
\[
 C\text{ is LCD}
 \quad\Longleftrightarrow\quad
 \abs{W_C(\ii)}=2^{k/2}.
\]
\end{theorem}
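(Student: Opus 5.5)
We follow the ternary argument in Theorem~\ref{thm:ternary-gauss}, with the quadratic refinement $Q$ in place of the Euclidean square. First we compute $\abs{W_C(\ii)}^2=W_C(\ii)\overline{W_C(\ii)}$ as a double sum over $\bm x,\bm y\in C$. Substituting $\bm x=\bm y+\bm z$ and using \eqref{eq:binary-polarization}, which gives $Q(\bm y+\bm z)-Q(\bm y)\equiv Q(\bm z)+2(\bm y\cdot\bm z)\pmod 4$, we obtain
\[
 \abs{W_C(\ii)}^2=\sum_{\bm z\in C}\ii^{Q(\bm z)}\sum_{\bm y\in C}(-1)^{\bm y\cdot\bm z}.
\]
For fixed $\bm z$, the inner sum is a character sum over $C$. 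It equals $2^k$ if $\bm z\in R$ and $0$ otherwise. Hence
\[
 \abs{W_C(\ii)}^2=2^k\sum_{\bm z\in R}\ii^{Q(\bm z)}.
\]

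Next we evaluate the Gauss sum over the radical. On $R$, the polarization term $2(\bm x\cdot\bm y)$ vanishes, so $Q|_R$ is additive, that is, a homomorphism $R\to\Z/4\Z$. Moreover, $Q(\bm z)=\wt(\bm z)\equiv \bm z\cdot\bm z\pmod 2$, and this is $0$ for $\bm z\in R$. Therefore $Q|_R$ takes values in $\{0,2\}$, and $\bm z\mapsto\ii^{Q(\bm z)}=(-1)^{Q(\bm z)/2}$ is a character of $R$. By character orthogonality, $\sum_{\bm z\in R}\ii^{Q(\bm z)}$ equals $2^\rho$ if this character is trivial, i.e.\ if $Q|_R\equiv 0$, and equals $0$ otherwise. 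Substituting this into the previous display gives \eqref{eq:binary-radical}.

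For the LCD equivalence, if $C$ is LCD then $\rho=0$ and $Q|_R\equiv0$ trivially, so $\abs{W_C(\ii)}=2^{k/2}$. Conversely, if $\abs{W_C(\ii)}=2^{k/2}\neq0$, we must be in the second case of \eqref{eq:binary-radical}, so $2^{(k+\rho)/2}=2^{k/2}$ and thus $\rho=0$, i.e.\ $C\cap C^\perp=0$.

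The only delicate point is the step showing that $Q|_R$ is a homomorphism into $\{0,2\}$. This relies on \eqref{eq:binary-polarization} being an identity modulo $4$ with the cross term $2(\bm x\cdot\bm y)$ vanishing on $R$. It also needs a check that the substitution in the first step is handled correctly in $\Z/4\Z$, in particular the sign of $2(\bm y\cdot\bm z)$, which is irrelevant since $2\equiv-2\pmod 4$.
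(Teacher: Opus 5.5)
Your proposal is correct and follows the paper's proof: the substitution $\bm x=\bm y+\bm z$ together with \eqref{eq:binary-polarization} reduces $\abs{W_C(\ii)}^2$ to $2^k\sum_{\bm z\in R}\ii^{Q(\bm z)}$, and additivity of $Q$ on $R$ turns the remaining sum into a character sum. Your observation that $Q|_R$ takes values in $\set{0,2}$, and your explicit converse (using $2^{k/2}\neq 0$ to force the second case and hence $\rho=0$), spell out details the paper leaves implicit.
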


\begin{proof}
Writing $\bm x=\bm y+\bm z$ and using
\eqref{eq:binary-polarization}, we obtain
\[
 \abs{W_C(\ii)}^2
 =\sum_{\bm z\in C}\ii^{Q(\bm z)}
   \sum_{\bm y\in C}(-1)^{\bm y\cdot\bm z}
 =2^k\sum_{\bm z\in R}\ii^{Q(\bm z)}.
\]
The second equality follows from character orthogonality, as in the
proof of Theorem~\ref{thm:ternary-gauss}.  On $R$,
\eqref{eq:binary-polarization} shows that $Q$ is additive.
Hence $\bm z\mapsto\ii^{Q(\bm z)}$ is a character of $R$,
trivial precisely when $Q|_R\equiv0$.  Its sum is then $2^\rho$,
and is zero otherwise.  This proves \eqref{eq:binary-radical};
the LCD equivalence follows as in the ternary case.
\end{proof}

To describe the phase of $W_C(\ii)$ for an LCD code $C$, put
$\tau(C)=O$ if $C$ contains an odd-weight codeword and $\tau(C)=E$
otherwise.  Since
$Q(\bm x)\equiv\mathfrak{b}(\bm x,\bm x)\pmod2$, the restricted
bilinear form is alternating precisely when $\tau(C)=E$.

\begin{proposition}
\label{prop:binary-phase}
Let $C\leq\F_2^n$ be a $k$-dimensional LCD code.  Then
\begin{equation}
 W_C(\ii)=2^{k/2}\exp\!\left(\frac{\pi\ii\beta(C)}4\right),
\label{eq:binary-phase}
\end{equation}
where $\beta(C)\in\mathbb Z/8\mathbb Z$ is the Brown invariant
of the restricted quadratic refinement.  It satisfies
$\beta(C)\in\mathcal B_{\tau(C)}(k)$, where
\[
 \mathcal B_O(k)=\set{k-2t\pmod8:0\leq t\leq k},
 \qquad
 \mathcal B_E(k)=
 \begin{cases}
 \set{0,4},&k\text{ even},\\
 \varnothing,&k\text{ odd}.
 \end{cases}
\]
\end{proposition}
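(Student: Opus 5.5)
We follow the pattern of Proposition~\ref{prop:ternary-phase}: decompose $C$ into an orthogonal sum of small pieces, so that the Gauss sum factors into a product of local Gauss sums that can be computed directly. Since $C$ is LCD, the restricted Euclidean form $\mathfrak b$ on $C$ is nondegenerate. If $\mathfrak b(\bm x,\bm y)=0$, then \eqref{eq:binary-polarization} gives $Q(\bm x+\bm y)=Q(\bm x)+Q(\bm y)$. Hence for an orthogonal decomposition $C=C_1\oplus\cdots\oplus C_m$ we get $\sum_{\bm x\in C}\ii^{Q(\bm x)}=\prod_j\sum_{\bm x\in C_j}\ii^{Q(\bm x)}$. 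The phase is then additive in $\mathbb Z/8\mathbb Z$. Combined with Theorem~\ref{thm:binary-gauss}, which gives $\abs{W_C(\ii)}=2^{k/2}$, this shows $W_C(\ii)$ has the form $2^{k/2}e^{\pi\ii\beta/4}$. Here $\beta$ is the sum of the local phases, and we identify it with the Brown invariant $\beta(C)$ by the standard normalization of that invariant \cite{Brown1972}: it is additive and is defined by the Gauss sum. The substance of the proof is determining which $\beta$ occur.

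\emph{Case $\tau(C)=O$.} Here $\mathfrak b$ is a nondegenerate non-alternating symmetric form over $\F_2$. Such a form admits an orthonormal basis: $GG^{\mathsf T}=I_k$ for a suitable generator matrix $G$. We prove this by induction. Pick $\bm x\in C$ with $\mathfrak b(\bm x,\bm x)=1$ (an odd-weight word). Split off $\langle\bm x\rangle$, so that $C=\langle\bm x\rangle\oplus\bm x^{\perp_C}$. If the complement is alternating but nonzero, take a hyperbolic pair $\bm u,\bm v$ in it. Then $\bm x+\bm u,\ \bm x+\bm v,\ \bm x+\bm u+\bm v$ is orthonormal and spans $\langle \bm x,\bm u,\bm v\rangle$, which lets us rebase and continue the induction. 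Each basis vector $\bm e_j$ has $Q(\bm e_j)\in\{1,3\}$, since its weight is odd. The local sum is $1+\ii^{Q(\bm e_j)}=\sqrt2\,e^{\pm\pi\ii/4}$. So $\beta=\sum_j(\pm1)=k-2t$, where $t$ counts the basis vectors of weight $\equiv3\pmod4$, and $0\le t\le k$. This gives $\beta\in\mathcal B_O(k)$.

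\emph{Case $\tau(C)=E$.} Now $\mathfrak b$ is alternating and nondegenerate, so $k$ is even; this yields $\varnothing$ for odd $k$. There is a symplectic basis of hyperbolic pairs $(\bm u_j,\bm v_j)$, and $C$ is the orthogonal sum of the planes $H_j=\langle\bm u_j,\bm v_j\rangle$. On $H_j$, the value $Q$ is even on every element. Also $Q(\bm u+\bm v)=Q(\bm u)+Q(\bm v)+2$. With $Q(\bm u),Q(\bm v)\in\{0,2\}$, the local sum $1+\ii^{Q(\bm u)}+\ii^{Q(\bm v)}+\ii^{Q(\bm u+\bm v)}$ equals $2$ unless $Q(\bm u)=Q(\bm v)=2$, in which case it equals $-2$. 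So each local phase lies in $\{0,4\}$, and the sum does too. This gives $\beta\in\{0,4\}=\mathcal B_E(k)$.

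The main obstacle is the existence of an orthonormal basis in the odd case, namely the rebasing trick that absorbs an alternating complement. The $4$-dimensional computation in the even case is routine. The identification with the Brown invariant is definitional once additivity is established.
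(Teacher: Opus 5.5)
Your proposal is correct and follows the paper's proof essentially step for step. Both factor the Gauss sum over an orthogonal decomposition, use an orthonormal basis with local sums $1\pm\ii$ when $\tau(C)=O$ and symplectic planes with local sums $\pm2$ when $\tau(C)=E$, and then identify the phase index with the Brown invariant. The only difference is that you prove the orthonormal basis exists via the $\bm x+\bm u,\ \bm x+\bm v,\ \bm x+\bm u+\bm v$ rebasing trick, whereas the paper cites \cite[Theorem~3.1]{CarletEtAl2019} for this.
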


\begin{proof}
Since $C$ is LCD, its restricted bilinear form is nondegenerate.
By \eqref{eq:binary-polarization},
$Q(\bm x+\bm y)=Q(\bm x)+Q(\bm y)$ whenever
$\bm x\cdot\bm y=0$.  Consequently, the Gauss sum factors over
orthogonal direct sums.

Suppose first that $\tau(C)=O$.  The form is nonalternating,
so $C$ admits an orthonormal basis
$\bm e_1,\ldots,\bm e_k$
\cite[Theorem~3.1]{CarletEtAl2019}.
Since
$Q(\bm e_j)\equiv\mathfrak{b}(\bm e_j,\bm e_j)=1\pmod2$,
each $Q(\bm e_j)$ is $1$ or $3$ modulo four.
If exactly $t$ of these values are $3$, then
\[
 \begin{aligned}
 W_C(\ii)
 &=\prod_{j=1}^k\bigl(1+\ii^{Q(\bm e_j)}\bigr)\\
 &=(1+\ii)^{k-t}(1-\ii)^t
 =2^{k/2}\exp\!\left(\frac{\pi\ii(k-2t)}4\right).
 \end{aligned}
\]
Thus $\beta(C)\equiv k-2t\pmod8$.

Suppose now that $\tau(C)=E$.  The form is alternating, so $k$
is even and $C$ is an orthogonal direct sum of $k/2$ symplectic
planes \cite[Theorem~3.2]{CarletEtAl2019}.
On such a plane $P$ with symplectic basis $\bm e,\bm f$,
both $Q(\bm e)$ and $Q(\bm f)$ are even.  Moreover,
$\mathfrak{b}(\bm e,\bm f)=1$, so
\eqref{eq:binary-polarization} gives
\[
 Q(\bm e+\bm f)
 \equiv Q(\bm e)+Q(\bm f)+2\pmod4.
\]
Hence the Gauss sum on $P$ is
\[
 \sum_{\bm x\in P}\ii^{Q(\bm x)}
 =1+\ii^{Q(\bm e)}+\ii^{Q(\bm f)}
  -\ii^{Q(\bm e)+Q(\bm f)}
 \in\set{2,-2}.
\]
Multiplying over the $k/2$ planes gives
$W_C(\ii)\in\set{2^{k/2},-2^{k/2}}$, and therefore
$\beta(C)\in\set{0,4}$.

These two cases establish \eqref{eq:binary-phase} and the stated
phase restrictions.  The phase index defined by this Gauss-sum
identity is the Brown invariant
\cite[Section~2]{Wood1993}.
\end{proof}

We next evaluate the MacWilliams identity of
Lemma~\ref{lem:formal-macwilliams} at $\ii$ and $-1$.
The resulting relations hold for arbitrary real arrays satisfying
\eqref{eq:macwilliams}.

\begin{lemma}
\label{prop:formal-binary}
Let real arrays $(A_j)$ and $(B_j)$ satisfy
\eqref{eq:macwilliams} with $q=2$, and put
\[
 A(z)=\sum_{j=0}^nA_jz^j,
 \qquad
 B(z)=\sum_{j=0}^nB_jz^j.
\]
Then
\begin{align}
 B(\ii)&=2^{-k}(1+\ii)^n\overline{A(\ii)},
\label{eq:formal-binary-transport}\\
 B(-1)&=2^{n-k}A_n.
\label{eq:dual-minus-one}
\end{align}
\end{lemma}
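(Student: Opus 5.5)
The plan is to specialize Lemma~\ref{lem:formal-macwilliams} with $q=2$, which reads
\[
 B(z)=2^{-k}\sum_{w=0}^nA_w(1+z)^{n-w}(1-z)^w ,
\]
at the two points $z=\ii$ and $z=-1$. This is exactly parallel to the proof of Lemma~\ref{prop:formal-ternary}.

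For \eqref{eq:formal-binary-transport}, substitute $z=\ii$. Each summand becomes
\[
 (1+\ii)^{n-w}(1-\ii)^w=(1+\ii)^n\left(\frac{1-\ii}{1+\ii}\right)^w .
\]
Since $(1-\ii)/(1+\ii)=-\ii=\bar{\ii}$, this equals $(1+\ii)^n\,\bar{\ii}^{\,w}$. Summing against $A_w$ gives
\[
 \sum_w A_w\bar{\ii}^{\,w}=A(\bar{\ii}) .
\]
Because the coefficients $A_w$ are real, $A(\bar{\ii})=\overline{A(\ii)}$, which yields the claimed identity. The division by $1+\ii$ is harmless since $1+\ii\neq0$. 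Alternatively, one can skip the division and simply write $1-\ii=\overline{1+\ii}$ and $(1+\ii)^{n-w}\overline{(1+\ii)}^{\,w}=(1+\ii)^n\bar{\ii}^{\,w}$.

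For \eqref{eq:dual-minus-one}, substitute $z=-1$. The factor $(1+z)^{n-w}=0^{n-w}$ vanishes unless $w=n$. The formula is to be read as a polynomial identity, so $0^0=1$. Only the term $w=n$ survives, and it contributes $A_n\cdot 2^n$. Hence $B(-1)=2^{-k}2^nA_n$.

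There is no real obstacle here. The only points needing care are the use of real coefficients to pass conjugation through $A$, and the convention $0^0=1$ from the polynomial expansion at $z=-1$.
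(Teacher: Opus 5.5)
Your proposal is correct and matches the paper's proof: specialize Lemma~\ref{lem:formal-macwilliams} with $q=2$ at $z=\ii$, using $(1-\ii)/(1+\ii)=-\ii$ and $A(-\ii)=\overline{A(\ii)}$ for real coefficients, and at $z=-1$, where only the $w=n$ term survives. Your explicit $0^0=1$ convention spells out what the paper means by the ``expanded MacWilliams polynomial.''
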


\begin{proof}
Apply Lemma~\ref{lem:formal-macwilliams} at $z=\ii$.  Since
$(1-\ii)/(1+\ii)=-\ii$ and
$A(-\ii)=\overline{A(\ii)}$ for real coefficients, we obtain
\eqref{eq:formal-binary-transport}.  At $z=-1$, only the $j=n$
term survives in the expanded MacWilliams polynomial, giving
\eqref{eq:dual-minus-one}.
\end{proof}

If $C$ is LCD, then so is $C^\perp$, which has dimension $n-k$.
Proposition~\ref{prop:binary-phase} and
Lemma~\ref{prop:formal-binary} therefore give
\[
 \begin{aligned}
 W_{C^\perp}(\ii)
 &=2^{-k}(1+\ii)^n\overline{W_C(\ii)}\\
 &=2^{(n-k)/2}\exp\!\left(\frac{\pi\ii(n-\beta(C))}{4}\right).
 \end{aligned}
\]
Comparing with Proposition~\ref{prop:binary-phase} applied to
$C^\perp$ yields
\begin{equation}
 \beta(C^\perp)\equiv n-\beta(C)\pmod8.
\label{eq:binary-dual-brown}
\end{equation}

The value $W_{C^\perp}(-1)$ counts the even-weight codewords of
$C^\perp$ minus its odd-weight codewords.  Since
$\abs{C^\perp}=2^{n-k}$, \eqref{eq:dual-minus-one} gives
\begin{equation}
 \begin{aligned}
 \tau(C^\perp)=E
 &\Longleftrightarrow W_{C^\perp}(-1)=2^{n-k}\\
 &\Longleftrightarrow A_n(C)=1\\
 &\Longleftrightarrow \bm 1\in C,
 \end{aligned}
\label{eq:binary-dual-type}
\end{equation}
where $\bm 1=(1,\ldots,1)$ is the unique binary vector of weight $n$.
The pair $(E,E)$ is impossible for an LCD code, since
\eqref{eq:binary-dual-type}, applied to both $C$ and $C^\perp$,
would give $\bm 1\in C\cap C^\perp$.

\section{Linear programming bounds}
\label{sec:gauss-lp}

Section~\ref{sec:characterization} gives finitely many possible
Gauss phases for LCD codes.  Once a phase and, in the binary case,
a parity type are fixed, the total number of codewords in each weight residue class is determined.  We use these
constraints to strengthen the LCD counting LP
while retaining only the ordinary weight distributions.

Throughout this section, fix
\[
 q\in\set{2,3},\qquad 1\leq k<n,\qquad 1\leq d\leq n-k+1.
\]
Here $d$ is a proposed lower bound on the minimum distance; the upper
limit of $d$ is the Singleton bound.  The Delsarte LP
\cite{Delsarte1973} uses real variables
$A_0,\ldots,A_n,B_0,\ldots,B_n$ satisfying
\begingroup
\let\baselineLabel\label
\begin{equation}
 \begin{aligned}
 A_0&=B_0=1,\\
 A_w&\geq0,\quad B_w\geq0 &&(0\leq w\leq n),\\
 A_w&=0 &&(1\leq w<d),\\
 q^kB_j&=\sum_{w=0}^n A_wK_j^{(q)}(w) &&(0\leq j\leq n).
 \end{aligned}
\label{eq:hamming-lp}
% Preserve labels from the former alphabet-specific baseline displays.
\baselineLabel{eq:g3-norm}
\baselineLabel{eq:g3-nonneg}
\baselineLabel{eq:g3-support}
\baselineLabel{eq:g3-mac}
\baselineLabel{eq:g2-norm}
\baselineLabel{eq:g2-nonneg}
\baselineLabel{eq:g2-support}
\baselineLabel{eq:g2-mac}
\end{equation}
\endgroup
The last equations are the MacWilliams identities
\eqref{eq:macwilliams}.  Their $j=0$ case and
Lemma~\ref{lem:formal-macwilliams} at $z=1$ give, respectively,
\[
 \sum_{w=0}^n A_w=q^k,\qquad
 \sum_{w=0}^n B_w=q^{n-k}.
\]
These normalizations hold for the real LP variables, whether or not
they arise from a code.

For an LCD code $C$, the weight-$w$ codewords in $C$ and $C^\perp$ are
distinct whenever $w>0$.  Their combined number cannot exceed the
number of ambient vectors of weight $w$, so we also impose
\begingroup
\let\baselineLabel\label
\begin{equation}
 A_w+B_w\leq(q-1)^w\binom nw,
 \qquad 1\leq w\leq n.
\label{eq:lcd-shell}
\baselineLabel{eq:g3-shell}
\baselineLabel{eq:g2-shell}
\end{equation}
\endgroup
The counting inequalities \eqref{eq:lcd-shell} were used by
Dougherty et al.\ \cite{DoughertyEtAl2017} for binary LCD codes
and extended to ternary codes by Karabakla and
\"Ozkaya~\cite{KarabaklaOzkaya2026}.  We use their counting
constraints in the following real-valued formulation at fixed
dimension.

\begin{definition}[LCD counting LP]
\label{def:lcd-counting-lp}
For fixed $q\in\set{2,3}$, $1\leq k<n$, and
$1\leq d\leq n-k+1$, let $\mathcal H_q(n,k,d)$ denote the linear
feasibility problem over real variables $(A_w)_{w=0}^n$ and
$(B_w)_{w=0}^n$ defined by \eqref{eq:hamming-lp} and
\eqref{eq:lcd-shell}.  We call this the \emph{LCD counting LP}.
For fixed $(n,k)$, its distance bound is
\[
 D_{\mathrm H}^{(q)}(n,k)
 =\max\set{d\in\set{1,\ldots,n-k+1}:
                \mathcal H_q(n,k,d)\text{ is feasible}}.
\]
\end{definition}

Every $q$-ary $[n,k,d]$ LCD code satisfies
$d\leq D_{\mathrm H}^{(q)}(n,k)$, since the weight distributions
of the code and its dual satisfy all the defining constraints.
The comparisons below use precisely this relaxation over real
variables. Each refinement adds residue equations for
a fixed choice of phase and, in the binary case, parity data.
We call each such choice a \emph{branch}.

\subsection{Ternary LP bounds}
\label{subsec:ternary-lp}

Let $C\leq\F_3^n$ be a $k$-dimensional LCD code.
For $a\in\set{0,1,2}$, define 
\[
 R_a(C)=
 \sum_{\substack{0\leq j\leq n\\j\equiv a \pmod{3}}} A_j(C).
\]
Using $1+\omega+\omega^2=0$, with $\omega=e^{2\pi\ii/3}$, we obtain
\[
 R_a(C)
 =\frac13\sum_{t=0}^2\omega^{-at}W_C(\omega^t).
\]
Since $W_C(1)=3^k$ and
$W_C(\omega^2)=\overline{W_C(\omega)}$,
Proposition~\ref{prop:ternary-phase} yields
\begin{equation}
 R_a(C)=
 \frac13\left[3^k+2\Real\!\left(
 \omega^{-a}\varepsilon(C)(\ii\sqrt3)^k\right)\right],
 \qquad a=0,1,2.
\label{eq:ternary-residues}
\end{equation}
Thus $\varepsilon(C)$ completely determines the residue totals
$R_a(C)$.  Each identity in \eqref{eq:ternary-residues} gives a
linear constraint on the $A_j$.  Since the totals sum to $3^k$,
we retain only the $a=0,1$ equations.

For $\varepsilon(C)=\epsilon\in\set{1,-1}$, these two totals $R_0(C), R_1(C)$ can be evaluated: 
\begin{equation}
 \left(R_0(C),R_1(C)\right)=
 \begin{cases}
 \left(3^{k-1}-2\epsilon(-3)^{(k-2)/2},
       3^{k-1}+\epsilon(-3)^{(k-2)/2}\right),&k\text{ even},\\
 \left(3^{k-1},
       3^{k-1}+\epsilon(-3)^{(k-1)/2}\right),&k\text{ odd}.
 \end{cases}
\label{eq:ternary-gauss-lp}
\end{equation}
Denote the two entries on the right by $T_0(\epsilon),T_1(\epsilon)$, respectively.

\begin{definition}
\label{def:ternary-gauss-lp}
For fixed $(n,k,d)$ and $\epsilon\in\set{1,-1}$, the
$\epsilon$-branch consists of the LCD counting LP
$\mathcal H_3(n,k,d)$ from Definition~\ref{def:lcd-counting-lp},
together with
\begin{align}
 \sum_{\substack{0\leq j\leq n\\j\equiv0 \pmod{3}}}A_j&=T_0(\epsilon),
\label{eq:g3-res0}\\
 \sum_{\substack{0\leq j\leq n\\j\equiv1 \pmod{3}}}A_j&=T_1(\epsilon).
\label{eq:g3-res1}
\end{align}
\end{definition}

Through the MacWilliams identities \eqref{eq:macwilliams}, the residue equations for
$(B_j)$ also give linear constraints on $(A_j)$.  By
Lemma~\ref{prop:formal-ternary}, however, these constraints
already follow from the two residue equations above and are
therefore redundant.

We call the ternary LP relaxation feasible if at least one of
the two branches $\epsilon\in\set{1,-1}$ in
Definition~\ref{def:ternary-gauss-lp} is feasible.

\begin{proposition}
\label{prop:ternary-lp-bound}
For fixed $(n,k)$, let $D$ be the largest integer
$1\leq d\leq n-k+1$ for which the ternary LP relaxation is
feasible.  Then every ternary $[n,k,d]$ LCD code satisfies
$d\leq D$.
\end{proposition}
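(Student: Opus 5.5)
The plan is to show that the true weight distributions of any ternary $[n,k,d]$ LCD code $C$ give a feasible point of the branch indexed by $\epsilon=\varepsilon(C)$ in Definition~\ref{def:ternary-gauss-lp}. Once that is done, the relaxation is feasible at the code's minimum distance $d$. Since $D$ is the largest feasible value in $\{1,\ldots,n-k+1\}$, it follows that $d\leq D$. To see that $d$ lies in the admissible range, we only need the Singleton bound $d\leq n-k+1$ and $k\geq 1$.

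First, set $A_w=A_w(C)$ and $B_w=A_w(C^\perp)$, and check the constraints of $\mathcal H_3(n,k,d)$.
\begin{itemize}[leftmargin=*]
\item The normalizations $A_0=B_0=1$ hold because each code contains only one zero vector.
\item Nonnegativity holds because these are counts.
\item $A_w=0$ for $1\leq w<d$ by the definition of the minimum distance.
\item The MacWilliams identities hold by \eqref{eq:macwilliams}.
\item For the counting inequalities \eqref{eq:lcd-shell}: when $w\geq1$, the weight-$w$ vectors of $C$ and of $C^\perp$ form disjoint sets, because $C\cap C^\perp=\{\bm 0\}$. Both sets lie among the $2^w\binom nw$ vectors of weight $w$ in $\F_3^n$.
\end{itemize}

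Second, verify the residue equations \eqref{eq:g3-res0}--\eqref{eq:g3-res1} with $\epsilon=\varepsilon(C)$. Character orthogonality gives $R_a(C)=\frac13\sum_t\omega^{-at}W_C(\omega^t)$. By Proposition~\ref{prop:ternary-phase} we have $W_C(\omega)=\varepsilon(C)(\ii\sqrt3)^k$, and $W_C(\omega^2)=\overline{W_C(\omega)}$ because the coefficients are real. Substituting these yields \eqref{eq:ternary-residues}.

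What remains is to check that \eqref{eq:ternary-residues} reduces to the closed forms \eqref{eq:ternary-gauss-lp}. This is the main point needing care, though it is only arithmetic. We write $(\ii\sqrt3)^k=3^{k/2}\ii^k$ and compute $2\Real(\omega^{-a}\ii^k)$ for $a=0,1$, separating the cases by parity of $k$.
\begin{itemize}[leftmargin=*]
\item For $k$ even, $\ii^k=(-1)^{k/2}$. Using $\Real\omega^{-1}=-\tfrac12$, we get $R_0=3^{k-1}+\tfrac23\epsilon(-3)^{k/2}=3^{k-1}-2\epsilon(-3)^{(k-2)/2}$ and $R_1=3^{k-1}+\epsilon(-3)^{(k-2)/2}$.
\item For $k$ odd, $(\ii\sqrt3)^k$ is purely imaginary, so $R_0=3^{k-1}$. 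Since $\Real(\omega^{-1}\ii)=\Imag\omega=\sqrt3/2$, we get $R_1=3^{k-1}+\epsilon(-1)^{(k-1)/2}3^{(k-1)/2}$, which matches \eqref{eq:ternary-gauss-lp}.
\end{itemize}

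Hence the $\varepsilon(C)$-branch is feasible at $d$, and the ternary LP relaxation is therefore feasible at $d$, which gives $d\leq D$.
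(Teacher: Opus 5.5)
Your proposal is correct and is exactly the paper's argument: the weight distributions of $C$ and $C^\perp$ form a feasible point of the $\varepsilon(C)$-branch of Definition~\ref{def:ternary-gauss-lp}, so the relaxation is feasible at $d$ and hence $d\leq D$. Your constraint-by-constraint check and your derivation of \eqref{eq:ternary-gauss-lp} from \eqref{eq:ternary-residues} spell out what the paper establishes in the text before the definition, and your arithmetic is correct.
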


\begin{proof}
Let $C$ be a ternary $[n,k,d]$ LCD code.  The weight
distributions of $C$ and $C^\perp$ satisfy the branch with
$\epsilon=\varepsilon(C)$.  Thus the relaxation is feasible
at $d$, and hence $d\leq D$.
\end{proof}

We call $D$ the ternary Gauss-phase LP distance bound.
It can be computed by testing $d=n-k+1,n-k,n-k-1,\ldots$ in
descending order until at least one branch is feasible.
Since the support constraints weaken as $d$ decreases,
the first feasible value is $D$.
\subsection{Binary LP bounds}
\label{subsec:binary-lp}

Let $C\leq\F_2^n$ be a $k$-dimensional LCD code.
For $a\in\set{0,1,2,3}$, define
\[
 R_a(C)=
 \sum_{\substack{0\leq j\leq n\\j\equiv a\pmod{4}}}A_j(C).
\]
As in the ternary case, we have
\[
 R_a(C)=\frac14\sum_{t=0}^3\ii^{-at}W_C(\ii^t).
\]
Since $W_C(1)=2^k$, $W_C(-1)=\Delta_{\tau(C)}(k)$, where
\[
 \Delta_\tau(k)=
 \begin{cases}
 0,&\tau=O,\\
 2^k,&\tau=E,
 \end{cases}
\] 
and 
$W_C(-\ii)=\overline{W_C(\ii)}$, 
Proposition~\ref{prop:binary-phase} yields
\begin{equation}
 R_a(C)=
 \frac14\left[
 2^k+(-1)^a\Delta_{\tau(C)}(k)
 +2\Real\!\left(
 \ii^{-a}2^{k/2}e^{\pi\ii\beta(C)/4}
 \right)\right],
 \quad a=0,1,2,3.
\label{eq:binary-residues}
\end{equation}
Thus $\tau(C)$ and $\beta(C)$ completely determine the residue
totals $R_a(C)$.  Each identity in \eqref{eq:binary-residues} gives
a linear constraint on the $A_j$.  Since the totals sum to $2^k$,
we retain only the $a=0,1,2$ equations.

For $\tau(C)=\tau \in \{O,E\}$ and $\beta(C)=\beta\in\mathcal B_\tau(k)$,
the three retained totals are
\[
 \begin{aligned}
 R_0(C)&=\frac14\left(2^k+\Delta_\tau(k)
              +2^{k/2+1}\cos\frac{\pi\beta}{4}\right),\\
 R_1(C)&=\frac14\left(2^k-\Delta_\tau(k)
              +2^{k/2+1}\sin\frac{\pi\beta}{4}\right),\\
 R_2(C)&=\frac14\left(2^k+\Delta_\tau(k)
              -2^{k/2+1}\cos\frac{\pi\beta}{4}\right).
 \end{aligned}
\]
Denote the three expressions on the right by
$T_0(\tau,\beta),T_1(\tau,\beta),T_2(\tau,\beta)$, respectively.
The factors $2^{k/2}\cos(\pi\beta/4)$ and
$2^{k/2}\sin(\pi\beta/4)$ are integers, since
$\beta\equiv k\pmod2$.

The branch choice must also account for the parity type of
$C^\perp$.  Since the all-one vector is the unique binary vector
of weight $n$, $A_n(C)\in\set{0,1}$ records whether it lies in $C$.
By \eqref{eq:binary-dual-type}, this determines $\tau^\perp:=\tau(C^\perp)$.
In the LP, we impose $A_n=\eta$ with $\eta\in\set{0,1}$ and set
\[
 \tau^\perp=
 \begin{cases}
 E,&\eta=1,\\
 O,&\eta=0.
 \end{cases}
\]
We call $\eta$ the \emph{endpoint selector}.
By \eqref{eq:binary-dual-brown}, the dual phase is
$\beta^\perp :=\beta(C^\perp) \equiv n-\beta\pmod8$; moreover, the pair $(E,E)$ is impossible.
Accordingly, a triple $(\tau,\beta,\eta)$ with
$\tau\in\set{O,E}$, $\beta\in\mathcal B_\tau(k)$, and
$\eta\in\set{0,1}$ is called \emph{admissible} if
\[
 (\tau,\tau^\perp)\ne(E,E),
 \qquad
 (n-\beta)\bmod8\in\mathcal B_{\tau^\perp}(n-k).
\]
There are at most eight admissible branches: the type pairs
$(O,O)$, $(O,E)$, and $(E,O)$ allow at most four, two, and two
phases, respectively.

\begin{definition}
\label{def:binary-gauss-lp}
For fixed $(n,k,d)$ and an admissible triple $(\tau,\beta,\eta)$,
the $(\tau,\beta,\eta)$-branch consists of the LCD counting LP
$\mathcal H_2(n,k,d)$ from Definition~\ref{def:lcd-counting-lp},
together with
\begin{align}
 \sum_{\substack{0\leq j\leq n\\j\equiv0\pmod{4}}}A_j
 &=T_0(\tau,\beta),
\label{eq:g2-res0}\\
 \sum_{\substack{0\leq j\leq n\\j\equiv1\pmod{4}}}A_j
 &=T_1(\tau,\beta),
\label{eq:g2-res1}\\
 \sum_{\substack{0\leq j\leq n\\j\equiv2\pmod{4}}}A_j
 &=T_2(\tau,\beta),
\label{eq:g2-res2}\\
 A_n&=\eta.
\label{eq:binary-endpoint-selector}
\end{align}
\end{definition}

Through the MacWilliams identities \eqref{eq:macwilliams}, the
residue equations for $(B_j)$ also give linear constraints on
$(A_j)$.  By Lemma~\ref{prop:formal-binary}, however, these
constraints follow from the three residue equations above and
$A_n=\eta$, and are therefore redundant.

We call the binary LP relaxation feasible if at least one of
the admissible branches in Definition~\ref{def:binary-gauss-lp}
is feasible.

\begin{proposition}
\label{prop:binary-lp-bound}
For fixed $(n,k)$, let $D$ be the largest integer
$1\leq d\leq n-k+1$ for which the binary LP relaxation is
feasible.  Then every binary $[n,k,d]$ LCD code satisfies
$d\leq D$.
\end{proposition}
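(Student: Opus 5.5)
The argument parallels Proposition~\ref{prop:ternary-lp-bound}: I will show that the weight distributions of any binary $[n,k,d]$ LCD code form a feasible point of at least one admissible branch at distance parameter $d$. Maximality of $D$ then gives $d\leq D$.

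Let $C$ be such a code. Set $A_j=A_j(C)$ and $B_j=A_j(C^\perp)$, and take $\tau=\tau(C)$, $\beta=\beta(C)$ and $\eta=A_n(C)$.

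First I check the constraints of $\mathcal H_2(n,k,d)$. Normalization, nonnegativity and the support condition $A_w=0$ for $1\leq w<d$ hold because $C$ has minimum distance $d$. The MacWilliams identities are \eqref{eq:macwilliams}. The shell inequalities \eqref{eq:lcd-shell} hold because $C\cap C^\perp=\set{\bm 0}$, so nonzero codewords of weight $w$ in $C$ and in $C^\perp$ are distinct vectors of weight $w$.

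Next come the residue equations \eqref{eq:g2-res0}--\eqref{eq:g2-res2}. Proposition~\ref{prop:binary-phase} gives $W_C(\ii)=2^{k/2}e^{\pi\ii\beta/4}$ with $\beta\in\mathcal B_\tau(k)$. We also have $W_C(-1)=\Delta_\tau(k)$, since an odd-weight codeword splits $C$ into equal even and odd halves. Substituting these into the character-sum expression for $R_a(C)$ gives \eqref{eq:binary-residues}, so $R_a(C)=T_a(\tau,\beta)$ for $a=0,1,2$. The selector equation $A_n=\eta$ holds by definition, and $\eta\in\set{0,1}$ because $\bm 1$ is the only weight-$n$ binary vector.

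The step needing the most care is admissibility of $(\tau,\beta,\eta)$.
\begin{itemize}
\item By \eqref{eq:binary-dual-type}, the type $\tau^\perp$ assigned to $\eta$ equals the true $\tau(C^\perp)$.
\item Since $C^\perp$ is LCD of dimension $n-k$, Proposition~\ref{prop:binary-phase} applied to $C^\perp$ gives $\beta(C^\perp)\in\mathcal B_{\tau(C^\perp)}(n-k)$.
\item By \eqref{eq:binary-dual-brown}, $\beta(C^\perp)\equiv n-\beta\pmod 8$, so $(n-\beta)\bmod 8\in\mathcal B_{\tau^\perp}(n-k)$.
\item The pair $(E,E)$ is excluded by the remark after \eqref{eq:binary-dual-type}: it would put $\bm 1$ in $C\cap C^\perp$.
\end{itemize}
Hence the branch is admissible, and the binary LP relaxation is feasible at $d$.

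Finally, $d$ lies in $\set{1,\ldots,n-k+1}$ by the Singleton bound. Since $D$ is the largest such value at which the relaxation is feasible, $d\leq D$.
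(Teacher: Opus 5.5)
Your proposal is correct and follows the paper's proof. You exhibit the weight distributions of $C$ and $C^\perp$ as a feasible point of the branch $(\tau(C),\beta(C),A_n(C))$ at distance $d$, and maximality of $D$ then gives $d\leq D$; the difference is only that you spell out the constraint checks (counting LP, residue equations, selector, and admissibility via \eqref{eq:binary-dual-type} and \eqref{eq:binary-dual-brown}) that the paper's short proof leaves implicit.
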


\begin{proof}
Let $C$ be a binary $[n,k,d]$ LCD code.  The weight distributions
of $C$ and $C^\perp$ satisfy the admissible branch
$(\tau(C),\beta(C),A_n(C))$.  Thus the relaxation is feasible
at $d$, and hence $d\leq D$.
\end{proof}

We call $D$ the binary Gauss-phase LP distance bound.
As in the ternary case, it can be computed by testing
$d=n-k+1,n-k,n-k-1,\ldots$ in descending order until at least
one admissible branch is feasible.

Both refinements retain the $2(n+1)$ real variables of the LCD
counting LP, adding two residue equations in the ternary case and
three residue equations together with $A_n=\eta$ in the binary case.
Since every branch retains the baseline constraints, the resulting
distance bounds cannot exceed the baseline bounds.  Feasibility is
only a necessary condition for the existence of a code.

\section{Numerical bounds and exact verification}
\label{sec:computations}
We compare the Gauss-phase bounds with the LCD counting LP
\cite{DoughertyEtAl2017,KarabaklaOzkaya2026} and, in the binary case,
the joint-weight-enumerator LP \cite{AlahmadiEtAl2019}.

\subsection{Bounds and parameter ranges}
\label{subsec:numerical-protocol}
Write $D_{\mathrm H}^{(q)}(n,k)$ and $D_{\mathrm G}^{(q)}(n,k)$ for
the LCD counting and Gauss-phase bounds defined in
Section~\ref{sec:gauss-lp}. We suppress $q,n,k$ when clear.
Our comparisons concern these real-valued LP relaxations, rather than
published tables incorporating other upper bounds. A feasible LP point
need not correspond to a code.

We consider all $1\leq k<n$ for binary lengths $2\leq n\leq20$
and ternary lengths $2\leq n\leq14$, giving $190$ and $91$
parameter pairs, respectively.

Exactly verified bounds are supported by rational feasible points and
integer infeasibility certificates; values obtained only through
floating-point optimization are called screening values. Verification
details are provided in Online Resource~1.

\subsection{Comparison with the LCD counting LP}
\label{subsec:hamming-gauss-comparison}
Table~\ref{tab:scan-summary} summarizes the comparison with the LCD counting LP.
The Gauss-phase bound is strictly smaller for $62$ binary and $39$ ternary
pairs, listed in Tables~\ref{tab:all-gains-2} and~\ref{tab:all-gains-3};
both endpoints in all $101$ comparisons are verified exactly.
The remaining $128$ binary and $52$ ternary pairs have equal screening values.
Every improvement is by one except at the binary pair $(15,4)$,
where the bound decreases from $8$ to $6$.

\begin{table}[htbp]
\centering
\small
\caption{Comparison with the LCD counting LP.}
\label{tab:scan-summary}
\begin{tabular}{clrrrc}
\toprule
$q$ & Length range & Pairs & Equal & Strict & Largest decrease \\
\midrule
2 & $2\leq n\leq 10$ & 45 & 38 & 7 & 1 \\
2 & $11\leq n\leq 15$ & 60 & 33 & 27 & 2 \\
2 & $16\leq n\leq 20$ & 85 & 57 & 28 & 1 \\
3 & $2\leq n\leq 8$ & 28 & 20 & 8 & 1 \\
3 & $9\leq n\leq 11$ & 27 & 13 & 14 & 1 \\
3 & $12\leq n\leq 14$ & 36 & 19 & 17 & 1 \\
\bottomrule
\end{tabular}
\end{table}

\begin{table}[htbp]
\centering
\small
\setlength{\tabcolsep}{5pt}
\caption{Binary strict improvements over the LCD counting LP.}
\label{tab:all-gains-2}
\begin{tabular}{rrcc@{\quad}|@{\quad}rrcc@{\quad}|@{\quad}rrcc}
\toprule
$n$ & $k$ & $D_{\mathrm H}$ & $D_{\mathrm G}$ & $n$ & $k$ & $D_{\mathrm H}$ & $D_{\mathrm G}$ & $n$ & $k$ & $D_{\mathrm H}$ & $D_{\mathrm G}$ \\
\midrule
5 & 2 & 3 & 2 & 14 & 4 & 7 & 6 & 18 & 4 & 9 & 8 \\
6 & 2 & 4 & 3 & 14 & 5 & 6 & 5 & 18 & 5 & 8 & 7 \\
6 & 3 & 3 & 2 & 14 & 6 & 6 & 5 & 18 & 6 & 8 & 7 \\
7 & 3 & 4 & 3 & 14 & 7 & 5 & 4 & 18 & 7 & 7 & 6 \\
7 & 4 & 3 & 2 & 14 & 9 & 4 & 3 & 18 & 9 & 6 & 5 \\
8 & 3 & 4 & 3 & 14 & 10 & 3 & 2 & 18 & 10 & 5 & 4 \\
10 & 5 & 4 & 3 & 15 & 3 & 8 & 7 & 19 & 4 & 10 & 9 \\
11 & 2 & 7 & 6 & 15 & 4 & 8 & 6 & 19 & 5 & 9 & 8 \\
11 & 3 & 6 & 5 & 15 & 5 & 7 & 6 & 19 & 7 & 8 & 7 \\
11 & 4 & 5 & 4 & 15 & 7 & 6 & 5 & 19 & 8 & 7 & 6 \\
12 & 2 & 8 & 7 & 15 & 8 & 5 & 4 & 19 & 10 & 6 & 5 \\
12 & 4 & 6 & 5 & 15 & 10 & 4 & 3 & 19 & 11 & 5 & 4 \\
12 & 5 & 5 & 4 & 15 & 11 & 3 & 2 & 20 & 3 & 11 & 10 \\
12 & 7 & 4 & 3 & 16 & 4 & 8 & 7 & 20 & 5 & 10 & 9 \\
12 & 8 & 3 & 2 & 16 & 5 & 7 & 6 & 20 & 6 & 9 & 8 \\
13 & 3 & 7 & 6 & 16 & 7 & 6 & 5 & 20 & 7 & 8 & 7 \\
13 & 5 & 6 & 5 & 17 & 2 & 11 & 10 & 20 & 8 & 8 & 7 \\
13 & 6 & 5 & 4 & 17 & 5 & 8 & 7 & 20 & 9 & 7 & 6 \\
13 & 8 & 4 & 3 & 17 & 6 & 7 & 6 & 20 & 11 & 6 & 5 \\
13 & 9 & 3 & 2 & 18 & 2 & 12 & 11 & 20 & 12 & 5 & 4 \\
14 & 3 & 8 & 7 & 18 & 3 & 10 & 9 &  &  &  &  \\
\bottomrule
\end{tabular}
\end{table}

\begin{table}[htbp]
\centering
\small
\setlength{\tabcolsep}{5pt}
\caption{Ternary strict improvements over the LCD counting LP.}
\label{tab:all-gains-3}
\begin{tabular}{rrcc@{\quad}|@{\quad}rrcc@{\quad}|@{\quad}rrcc}
\toprule
$n$ & $k$ & $D_{\mathrm H}$ & $D_{\mathrm G}$ & $n$ & $k$ & $D_{\mathrm H}$ & $D_{\mathrm G}$ & $n$ & $k$ & $D_{\mathrm H}$ & $D_{\mathrm G}$ \\
\midrule
3 & 1 & 3 & 2 & 10 & 4 & 6 & 5 & 12 & 6 & 6 & 5 \\
3 & 2 & 2 & 1 & 10 & 5 & 5 & 4 & 12 & 8 & 4 & 3 \\
4 & 2 & 3 & 2 & 10 & 6 & 4 & 3 & 12 & 9 & 3 & 2 \\
6 & 1 & 6 & 5 & 11 & 2 & 8 & 7 & 12 & 11 & 2 & 1 \\
6 & 5 & 2 & 1 & 11 & 3 & 7 & 6 & 13 & 3 & 9 & 8 \\
7 & 2 & 5 & 4 & 11 & 5 & 6 & 5 & 13 & 4 & 8 & 7 \\
8 & 2 & 6 & 5 & 11 & 6 & 5 & 4 & 13 & 5 & 7 & 6 \\
8 & 3 & 5 & 4 & 11 & 7 & 4 & 3 & 13 & 9 & 4 & 3 \\
9 & 1 & 9 & 8 & 11 & 8 & 3 & 2 & 13 & 10 & 3 & 2 \\
9 & 3 & 6 & 5 & 12 & 1 & 12 & 11 & 14 & 3 & 9 & 8 \\
9 & 4 & 5 & 4 & 12 & 2 & 9 & 8 & 14 & 4 & 9 & 8 \\
9 & 5 & 4 & 3 & 12 & 3 & 8 & 7 & 14 & 6 & 7 & 6 \\
9 & 8 & 2 & 1 & 12 & 4 & 7 & 6 & 14 & 10 & 4 & 3 \\
\bottomrule
\end{tabular}
\end{table}
\FloatBarrier

\begin{remark}[Effect of the endpoint selector]
\label{rem:endpoint-selector}
At $(n,k)=(29,20)$, compatibility forces $A_{29}=0$.
The Gauss-phase endpoint is exactly $4$ with this selector and $5$ without it.
The selector-free endpoint follows from a rational witness at distance $5$
and a counting-LP exclusion at distance $6$; complete verification
is in Online Resource~1.
\end{remark}

\subsection{Comparison with the binary joint-weight-enumerator LP}
\label{subsec:mixed-numerics}
The binary joint-weight-enumerator LP, defined in
Appendix~\ref{sec:mixed-comparison}, uses joint coefficients to retain
information about a code and its dual. Denote its distance bound by
$D_{\mathrm M}(n,k)$. We compare it with the Gauss-phase bound on the
$62$ binary pairs in
Table~\ref{tab:all-gains-2}, where the latter already improves the LCD
counting bound.

The two bounds have equal screening values in $57$ cases.
Table~\ref{tab:gauss-mixed-exact} lists the five differences: the
Gauss-phase bound is stronger in four cases and the joint-weight-enumerator
bound is stronger in one. The four Gauss-phase improvements are verified
exactly. At $(20,8)$, the Gauss-phase value $7$ is exact, whereas the
joint-weight-enumerator value $6$ remains a screening result. Thus these
computations do not imply that either LP dominates the other in general.

\begin{table}[htbp]
\centering
\small
\caption{Differences between the Gauss-phase and joint-LP bounds.}
\label{tab:gauss-mixed-exact}
\begin{tabular}{rrcc}
\toprule
$n$ & $k$ & $D_{\mathrm M}$ & $D_{\mathrm G}$ \\
\midrule
14 & 5 & 6 & 5 \\
16 & 7 & 6 & 5 \\
18 & 5 & 8 & 7 \\
20 & 7 & 8 & 7 \\
20 & 8 & 6 & 7 \\
\bottomrule
\end{tabular}
\end{table}
\FloatBarrier

\section{Conclusion}
\label{sec:conclusion}
We have converted Gauss-phase information in binary and ternary weight enumerators into linear constraints for bounding the minimum distance of LCD codes. The resulting method preserves the $2(n+1)$ ordinary weight-distribution variables and requires only a bounded number of branches and additional equations. Its contribution is the use of the known special-value structure in a compact LP strengthening.

The complete scans through lengths $20$ and $14$, respectively, give $62$ binary and $39$ ternary strict improvements over the real-valued LCD counting LP, with exact endpoint verification for all $101$ comparisons. Four binary comparisons also give exact improvements over the joint-weight-enumerator LP. These results concern the specified relaxations and do not establish dominance over the joint model for all parameters or improvements over tables that combine several upper bounds.

One direction for further work is to impose the Gauss-phase equations within the joint-weight-enumerator model and determine when the two kinds of constraints yield complementary improvements. Another is to identify additional arithmetic constraints that strengthen the ordinary-weight formulation while preserving its small size.

\appendix
\section{The binary joint-weight-enumerator benchmark}
\label{sec:mixed-comparison}

For binary LCD codes we compare the Gauss-phase LP with the
joint-weight-enumerator LP of Alahmadi, Deza, Dutour Sikiri\'c, and
Sol\'e \cite{AlahmadiEtAl2019}.  This model is used only as a benchmark,
so we record the constraints needed for the comparison and refer to
\cite{AlahmadiEtAl2019} for the derivation of the joint MacWilliams
identity.

Put $D=C^\perp$.  For $(\bm x,\bm y)\in C\times D$, let $(a,b,c,e)$ count the
coordinate states $00,01,10,11$, respectively, and put
\[
 \mathcal I_n=
 \set{(a,b,c,e)\in\mathbb Z_{\geq0}^4:a+b+c+e=n}.
\]
For an actual code, $M_{a,b,c,e}$ counts the corresponding ordered
pairs $(x,y)$.  These four counts determine
\[
 \wt(\bm x)=c+e,\qquad
 \wt(\bm y)=b+e,\qquad
 \wt(\bm x+\bm y)=b+c.
\]
Since $C$ is LCD, $C\oplus D=\F_2^n$, so $(\bm x,\bm y)\mapsto \bm x+\bm y$ is a
bijection from $C\times D$ onto $\F_2^n$.  This explains the code,
dual, and ambient-shell marginals below.

\begin{definition}[Binary joint-weight-enumerator LP]
\label{def:mixed-benchmark}
Fix $1\leq k<n$ and $1\leq d\leq n-k+1$.  The benchmark
$\mathcal M_2(n,k,d)$ is the linear feasibility problem in
\[
 A_w,B_w\quad(0\leq w\leq n),
 \qquad
 M_{a,b,c,e}\quad((a,b,c,e)\in\mathcal I_n).
\]
The variables $A_w,B_w$ satisfy the LCD counting LP
$\mathcal H_2(n,k,d)$ of Definition~\ref{def:lcd-counting-lp}.
In addition,
\begin{align}
 M_{a,b,c,e}&\geq0,
 &&(a,b,c,e)\in\mathcal I_n,
\label{eq:mixed-nonneg}\\
 M_{a,b,c,e}&=0,
 &&e\text{ odd},
\label{eq:mixed-orth}\\
 M_{a,b,c,e}&=0,
 &&1\leq c+e<d,
\label{eq:mixed-support}\\
 M_{a,0,0,e}&=0,
 &&e>0.
\label{eq:mixed-diagonal}
\end{align}
Here \eqref{eq:mixed-orth} is $x\cdot y=0$, while
\eqref{eq:mixed-diagonal} is $C\cap D=\set{0}$.  For
$0\leq w\leq n$ impose
\begin{align}
 M_{n-w,w,0,0}&=B_w,&
 M_{n-w,0,w,0}&=A_w,
\label{eq:mixed-boundary}\\
 \sum_{c+e=w}M_{a,b,c,e}&=2^{n-k}A_w,&
 \sum_{b+e=w}M_{a,b,c,e}&=2^kB_w,
\label{eq:mixed-marginals}
\end{align}
where each sum is over $(a,b,c,e)\in\mathcal I_n$, and for
$0\leq j\leq n$ impose
\begin{equation}
 \sum_{b+c=j}M_{a,b,c,e}=\binom nj,
\label{eq:mixed-shell-sums}
\end{equation}
again over $(a,b,c,e)\in\mathcal I_n$.

Finally, define the joint enumerator
\[
 J_{CD}(x_0,x_1,x_2,x_3)=
 \sum_{(a,b,c,e)\in\mathcal I_n}
 M_{a,b,c,e}x_0^a x_1^b x_2^c x_3^e
\]
and
\[
 H_4=
 \begin{pmatrix}
 1&1&1&1\\
 1&1&-1&-1\\
 1&-1&1&-1\\
 1&-1&-1&1
 \end{pmatrix}.
\]
Writing $\mb x=(x_0,x_1,x_2,x_3)^{\mathsf T}$, impose the
four-variable MacWilliams invariance
\begin{equation}
 J_{CD}(\mb x)=2^{-n}J_{CD}(H_4\mb x).
\label{eq:mixed-invariance}
\end{equation}
\end{definition}

Equation~\eqref{eq:mixed-invariance} is the joint MacWilliams identity
from \cite{AlahmadiEtAl2019}.  Equating coefficients gives an ordinary
linear system.  Online Resource~1 contains code to reconstruct the coefficient
matrix and to check the rational witnesses against all these constraints.  For fixed
$(n,k)$, the joint-weight-enumerator distance bound is the largest $d$ for which
$\mathcal M_2(n,k,d)$ is feasible.

\FloatBarrier
\section*{Data and code availability}

The frozen companion archive, Online Resource~1 (version 2026-09-13),
contains the code, exact certificates, rational feasibility witnesses,
and machine-readable results supporting Section~\ref{sec:computations},
together with the computational supplement \emph{Exact verification and
independent checks}. Its \texttt{README.md} gives the artifact map,
software requirements, and reproduction commands. The archive is
self-contained and does not require access to the
\href{https://github.com/kmsming-prog/gauss-phase-lp-lcd-codes}{development
repository}%, which is private during peer review. 
%The frozen archive is the reference version for the computations reported here.

\section*{Statements and Declarations}

\subsection*{Funding}

The work of Ming-Hsuan Kang was supported by the National Science and
Technology Council, Taiwan, under Grant NSTC 115-2115-M-A49-010. The work of Maosheng Xiong was supported by the Research Grant Council (RGC) of Hong Kong under Grant no. 16307524. 

\subsection*{Competing interests}

The authors declare no competing interests.

\subsection*{Use of generative AI}

During manuscript preparation, the authors used OpenAI Codex to assist
with manuscript restructuring, software integration, and computational
consistency checks.  The authors independently reviewed and validated
the mathematical statements, computations, and final text and take full
responsibility for the work.

\end{document}